\DeclareMathOperator{\E}{\mathbb{E}}
\newtheorem{theorem}{\bf Theorem}
\newtheorem{proposition}{\bf Proposition}
\begin{document}
	\pagenumbering{gobble}
	%
	\title{Generative Adversarial Networks for Distributed Intrusion Detection in the Internet of Things}
	\IEEEoverridecommandlockouts
	\author{\IEEEauthorblockN{Aidin Ferdowsi and Walid Saad\\}
		\IEEEauthorblockA{			Wireless@VT, Bradley Department of Electrical and Computer Engineering, \\ Virginia Tech, Blacksburg, VA, USA,
			Emails: \{aidin,walids\}@vt.edu\\}\vspace{-10mm}
		\thanks{This research was supported by the U.S. National Science Foundation under Grants ACI-1541105 and IIS-1633363.}
	}
	\maketitle
	

	%
	\IEEEpeerreviewmaketitle
	
	\begin{abstract}
		To reap the benefits of the Internet of Things (IoT), it is imperative to secure the system against cyber attacks in order to enable mission critical and real-time applications. To this end, intrusion detection systems (IDSs) have been widely used to detect anomalies caused by a cyber attacker in IoT systems. However, due to the large-scale nature of the IoT, an IDS must operate in a distributed manner with minimum dependence on a central controller. Moreover, in many scenarios such as health and financial applications, the datasets are private and IoTDs may not intend to share such data. To this end, in this paper, a distributed generative adversarial network (GAN) is proposed to provide a fully distributed IDS for the IoT so as to detect anomalous behavior without reliance on any centralized controller. In this architecture, every IoTD can monitor its own data as well as neighbor IoTDs to detect internal and external attacks. In addition, the proposed distributed IDS does not require sharing the datasets between the IoTDs, thus, it can be implemented in IoTs that preserve the privacy of user data such as health monitoring systems or financial applications. It is shown analytically that the proposed distributed GAN has higher accuracy of detecting intrusion compared to a standalone IDS that has access to only a single IoTD dataset. Simulation results show that, the proposed distributed GAN-based IDS has up to 20\% higher accuracy, 25\% higher precision, and 60\% lower false positive rate compared to a standalone GAN-based IDS. 
	\end{abstract}	 \vspace{-4mm}
	\section{Introduction}
	The Internet of Things (IoT) will interconnect a plethora of devices that collect data from their environment, process such data, and transmit valuable information to end-user applications such as health monitoring systems, drone delivery systems, self driving cars, and smart grids\cite{8660516,Ferdowsi,7123563,saad2019vision,ZARPELAO201725}. However, an effective deployment of these diverse IoT services requires an efficient intrusion detection system (IDS) that assures secure and reliable data transmission from IoT devices (IoTDs) by detecting anomalous activities such as sending data which is different from their normal state \cite{ZARPELAO201725}. 
	
	The use of IDS solutions has been studied extensively in IoT networks\cite{LIAO201316,Liu,Mitchell2014,Lee2014,Summerville,Misra,info7020025,Amaral}. For instance, the works in \cite{LIAO201316} and \cite{Liu} proposed signature-based methods in which the network behavior is compared with a set of attack signatures which is stored in a database. Moreover, the authors in \cite{Mitchell2014,Lee2014,Summerville} have proposed anomaly-based IDSs in which the activities of a network are compared to the normal behavior of the system and an alarm is triggered whenever the deviation from the normal state exceeds a threshold. Furthermore, new techniques are proposed in \cite{Misra,info7020025,Amaral} to compare the state of the system with predefined specifications such as the maximum capacity of links, packet size, or abstract rules based on IoT traffic.
	
	However, due to several unique characteristics of IoT systems such as their scale and privacy-preserving nature, traditional IDSs such as those proposed in \cite{LIAO201316,Liu,Mitchell2014,Lee2014,Summerville,Misra,info7020025,Amaral} may not be effective. In traditional methods, the system administrator implements the IDS on computationally capable data centers or nodes that process the data from the entire IoT system or from a subset of neighboring nodes. However, due to the large-scale nature of the IoT, relying on centralized nodes to deploy IDS solutions exposes such nodes to attacks and, thus, eventually the rest of the IoT can also be vulnerable if the centralized IDS is compromised\cite{8509635}. In addition, most IDSs require the availability of large amounts of data points related to the normal state of the network and failure scenarios\cite{LIAO201316,Liu,Mitchell2014,Lee2014,Summerville,Misra,info7020025,Amaral}. However, in IoT systems, each IoTD may have a dataset that only contains a limited portion of the network's state. Moreover, in many applications such as health monitoring systems or financial activities, the end user may not intend to share its available dataset with the administrator of the system which makes the data-centered IDS solutions ineffective \cite{ZARPELAO201725}.

	Recently, a number of distributed IDS and attack detection methods have been proposed for improving IoT security in \cite{Thanigaivelan2016,Cervantes,Butun,Abeshu,DIRO2018761,RAJASEGARAR20141833}. The authors in \cite{Thanigaivelan2016} and \cite{Cervantes} proposed anomaly detection approaches in which every node monitors its neighbors and, in case of abnormal behavior of a neighbor, breaks the data link from that neighbor. The authors in \cite{Butun} presented the challenges related to distributed private datasets and cloud-centric IoT systems. The works in \cite{Abeshu} and \cite{DIRO2018761} developed distributed deep learning algorithms for cyber attack detection in fog-to-things systems. In \cite{RAJASEGARAR20141833}, a clustering method has been proposed that combines the measurements of IoTDs before sending a compact description of their data to other nodes so as to minimize the communication overhead. However, the proposed IDSs in \cite{LIAO201316,Liu,Mitchell2014,Lee2014,Summerville,Misra,info7020025,Amaral} are centralized and not applicable for large scale IoT systems. In addition, the distributed IDS methods in \cite{Thanigaivelan2016,Cervantes,Butun,Abeshu,DIRO2018761,RAJASEGARAR20141833} cannot be applied to IoTDs that seek to preserve the privacy of their data. 
	
	The main contribution of this paper is, thus, to propose a distributed generative adversarial network (GAN) architecture as an IDS for IoT systems. Recently, GANs emerged as an effective unsupervised anomaly detection approach in computer vision and time series applications \cite{Ravanbakhsh,Ravanbakhsh2,Schlegl,zenati2018efficient,intrator2018mdgan,li2018anomaly}. However, these computer vision and time-series works consider a centralized GAN that has access to all of the data points. In contrast, in an IoT, a centralized GAN may cause communication overhead and increase the vulnerability of the IoT to the attacks on central units. The proposed distributed GAN-based IDS enables the IoTDs to monitor their neighbors in order to detect intrusion with minimum dependence on a central unit. In particular, the proposed distributed IDS can detect any type of attack that manipulates the integrity of the IoTDs such as bad data injection attacks. Moreover, we analytically show the superiority of the proposed distributed GAN-based IDS in terms of accuracy and intrusion detection probability compared to a single standalone GAN IDS. Simulation results show that, for a daily activity dataset \cite{dataset}, the proposed distributed GAN-based IDS has up to 20\% higher accuracy, 25\% higher precision, and 60\% lower false positive rate compared to a standalone GAN-based IDS. To the best of our knowledge, this paper is the first to propose a distributed GAN architecture as an IDS in privacy preserving IoT systems.
	
	The rest of the paper is organized as follows. The IoT system model and GAN architecture are presented in Section \ref{sec:systemmodel}. The proposed distributed GAN-based IDS is described in Section \ref{sec:GAN}. The simulation results are studied in Section \ref{sec:simulations} and conclusions are drawn in \ref{sec:conc}.
	\section{System Model}\label{sec:systemmodel}
	Consider an IoT system composed of a set $ \mathcal{N} $ of $ n $ IoTDs in which each IoTD $ i $ owns a set of its previously transmitted data points,$ \mathcal{D}_i $, that follows a distribution $ p_{\text{data}_i}(x) $, where $x$ can be time series, financial records, or health monitoring datasets depending on the IoT application. We consider $ \mathcal{D}_i $ contains data points from the normal state of the IoTD in which there is no intrusion to the IoT. We also let $ \mathcal{D}_1\cup \mathcal{D}_2 \cup \dots \cup \mathcal{D}_n = \mathcal{D} $, where $ \mathcal{D} $ is the total available data with a distribution  $ p_\text{data} $. In this model, every IoTD $ i  $ tries to learn a \emph{generator} distribution $ p_{g_i} $ over its available dataset $ \mathcal{D}_i $ such that $ p_{g_i} = p_{\text{data}_i}  $ and use that distribution to detect an intrusion to the system. The intrusion in our system is any activity by an attacker that causes an IoTD to communicate data points which do not follow its data distribution $p_{d_i}$. In fact, if an IoTD knows the distribution of its own normal state, it can easily discriminate a data point that is not similar to the normal state distribution. To learn the distribution $ p_{g_i} $ at every IoTD $ i $, we define a prior input noise $ z $ with distribution $ p_{z_i}(z) $ and a mapping $ G_i(z,\boldsymbol{\theta}_{g_i}) $ from this random variable $ z $ to data space, where $ G_i $ is an artificial neural network (ANN) with parameters $ \boldsymbol{\theta}_{g_i} $. An ANN, usually, consists of artificial neurons and activation functions which map an input to an output. We define another ANN called \emph{discriminator} $ D_i(\boldsymbol{x},\boldsymbol{\theta}_{d_i}) $ for every IoTD $ i $ that receives a data point $ \boldsymbol{x} $ and outputs a value between $ 0 $ and $ 1 $. When the output of the discriminator is closer to $ 1 $, then the received data point is a normal state and when the output is closer to $ 0 $ then the received data is an anomaly at IoTD $ i $. 
	
	\begin{figure*}
		\centering
		\includegraphics[width=0.9\textwidth]{./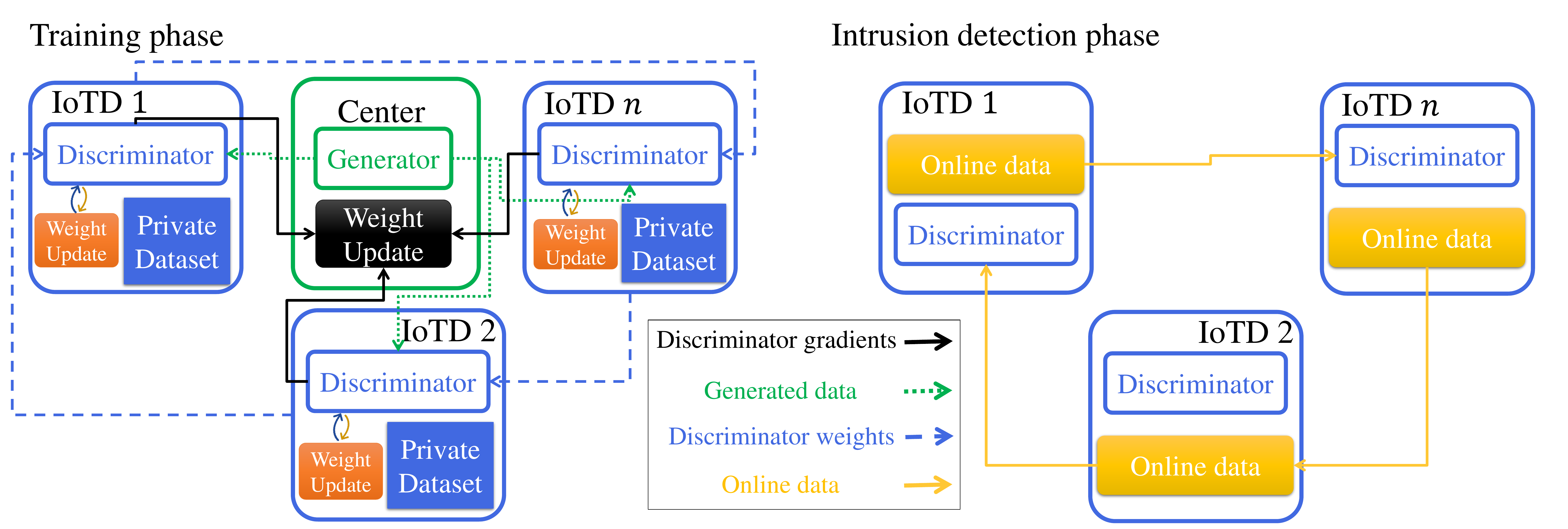}
		\caption{An illustration of the system model in training and intrusion detection phases.}
		\label{fig:systemmodel}
		\vspace{-5mm}
	\end{figure*}
	
	Every IoTD $i$'s generator ANN tries to generate data points close to the normal state data in order to find the best approximation of $p_{\textrm{data}_i}$. On the other hand, every IoTD's discriminator aims at discriminating the generated data points from its own dataset. In fact, the generated data points mimic anomalous state of the system because they are generated from a distribution $p_{g_i}$ that is not equal to $p_{\textrm{data}_i}$, because $\boldsymbol{\theta}_{g_i}$ are randomly chosen for an untrained ANN. Thus, the discriminator outputs 0 values for the generator \cite{NIPS2014_5423}. Therefore, the generator and discriminator at every IoTD $i$ interact to find the the optimal $\boldsymbol{\theta}_{g_i}$ and $\boldsymbol{\theta}_{d_i}$ such that the generator can generate data points close to the normal state while the discriminator can discriminate between the abnormal (anomalous) and normal data points. This interaction can be modeled using a game-theoretic framework \cite{NIPS2014_5423} in which we define a local \emph{value} function at every IoTD $ i $ as follows:
	\begin{align}\label{eq:localvalue}
		V_i (D_i,G_i) &= \mathbb{E}_{\boldsymbol{x}\sim p_{\textrm{data}}}\left[\log D_i(x)\right]\nonumber
		\\& +  \mathbb{E}_{z\sim p_{z_i}}\log\left( 1- D_i(G_i(z))\right).
	\end{align}
	The value function \eqref{eq:localvalue}, jointly quantifies how close is the generator's generated data points to the normal state and how good the discriminator can discriminate between the normal and abnormal data points.
	In \eqref{eq:localvalue}, the first term is defined to force the discriminator to produce values equal to 1 for real data. Meanwhile, the second term penalizes any anomalous point generated by the generators.
	While every IoTD's generator will seek to minimize the value function defined in \eqref{eq:localvalue}, the discriminator tries to maximize this value. Therefore, the optimal solutions for the discriminator and generator can be derived from the following minimax problem\cite{NIPS2014_5423}:
	\begin{align}\label{eq:minimax}
		\{D_i^*,G_i^* \}= \arg\min_{G_i}\arg\max_{D_i} V_i(D_i,G_i).
	\end{align}
	However, since every IoTD has access to only its own dataset, the optimization problem for a standalone IDS in which every IoTD has access only to its own dataset:
	\begin{align}\label{eq:minimaxstandalone}
	\{\bar{D}_i,\bar{G}_i\} = \arg\min_{G_i}\arg\max_{D_i} \bar{V}_i(D_i,G_i),
	\end{align}
	where 
	\begin{align}\label{eq:valuelocal}
		\bar{V}_i (D_i,G_i) &= \mathbb{E}_{\boldsymbol{x}\sim p_{\textrm{data}_i}}\left[\log D_i(x)\right]\nonumber
		\\& +  \mathbb{E}_{z\sim p_{z_i}}\log\left( 1- D_i(G_i(z))\right).
	\end{align}
	Next, we show that a standalone IDS cannot optimally detect intrusion. To this end, first, we derive the optimal value of a standalone IDS.
\begin{theorem}\label{Theorem:GAN}
	The optimal value for a standalone IDS will be:
	\begin{align}\label{eq:valuestandalone}
		V_i (\bar{D}_i,\bar{G}_i) = -\log(4) + s(p_{\textrm{data}_i}||p_{\textrm{data}}),
	\end{align}	
	where $s$ is the Jensen-Shanon divergence between distributions $ p_{\textrm{data}_i}$ and $p_{\textrm{data}} $.
\end{theorem}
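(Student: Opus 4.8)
The plan is to adapt the optimality analysis of the original GAN formulation \cite{NIPS2014_5423} to the present setting, the essential new feature being that the standalone discriminator and generator are trained against the \emph{local} distribution $p_{\textrm{data}_i}$, whereas the value we must evaluate, $V_i$, is taken against the \emph{global} distribution $p_{\textrm{data}}$. The strategy is therefore threefold: (i) find the discriminator that is optimal for the generator produced by the standalone training; (ii) substitute the known form of the standalone generator; and (iii) rearrange the resulting integral into a Jensen--Shannon divergence so that the constant $-\log(4)$ and the divergence penalty separate cleanly.

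First, I would rewrite the generator term of $V_i$ as an expectation over the induced distribution $p_{g_i}$, so that the objective becomes $\int \big[\, p_{\textrm{data}}(x)\log D_i(x) + p_{g_i}(x)\log(1-D_i(x))\,\big]\,dx$. Since for each fixed $x$ the integrand has the scalar form $a\log y + b\log(1-y)$, which is maximized over $y\in(0,1)$ at $y=a/(a+b)$, the optimal discriminator is $\bar{D}_i(x)=\dfrac{p_{\textrm{data}}(x)}{p_{\textrm{data}}(x)+p_{g_i}(x)}$. Second, I would invoke the standard GAN convergence result that the generator minimizing the standalone value $\bar{V}_i$ matches the local data, i.e. $p_{g_i}=p_{\textrm{data}_i}$; substituting this gives $\bar{D}_i(x)=p_{\textrm{data}}(x)/(p_{\textrm{data}}(x)+p_{\textrm{data}_i}(x))$. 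Third, plugging $\bar{D}_i$ and $\bar{G}_i$ back into $V_i$ leaves $\int p_{\textrm{data}}\log\frac{p_{\textrm{data}}}{p_{\textrm{data}}+p_{\textrm{data}_i}} + \int p_{\textrm{data}_i}\log\frac{p_{\textrm{data}_i}}{p_{\textrm{data}}+p_{\textrm{data}_i}}$. Introducing the mixture $m=\tfrac12(p_{\textrm{data}}+p_{\textrm{data}_i})$ and extracting a $\log 2$ from each denominator converts these two integrals into $\mathrm{KL}(p_{\textrm{data}}\,\|\,m)+\mathrm{KL}(p_{\textrm{data}_i}\,\|\,m)$ together with the constant $-\log(4)$, and the pair of KL terms is exactly the (symmetric) Jensen--Shannon divergence $s(p_{\textrm{data}_i}\,\|\,p_{\textrm{data}})$, which yields the claimed identity.

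I expect the main obstacle to be conceptual rather than computational: one must keep the training distribution $p_{\textrm{data}_i}$ and the evaluation distribution $p_{\textrm{data}}$ straight throughout, since it is precisely the mismatch between them that produces the nonnegative divergence penalty and hence the suboptimality of the standalone IDS, with the divergence vanishing only when $p_{\textrm{data}_i}=p_{\textrm{data}}$. The remaining care is technical: justifying the pointwise maximization of the discriminator (integrability of the densities and the fact that the maximizer lies strictly inside $(0,1)$ wherever the densities are positive) and matching the constant factor in front of the Jensen--Shannon term to the convention used in the definition of $s$.
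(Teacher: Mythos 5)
Your proposal follows essentially the same route as the paper's proof: invoke the GAN optimality result of \cite{NIPS2014_5423} to set $\bar{p}_{g_i}=p_{\textrm{data}_i}$, maximize the discriminator pointwise against the global value function to obtain $\bar{D}_i = p_{\textrm{data}}/(p_{\textrm{data}}+p_{\textrm{data}_i})$, and rearrange the resulting integrals into $-\log(4)$ plus the Jensen--Shannon divergence. Your write-up is, if anything, slightly more explicit about the pointwise maximization of $a\log y + b\log(1-y)$ and the decomposition into the two KL terms against the mixture, but the argument is the same.
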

\begin{proof}
	To find the optimal strategies of the generator and discriminator of a standalone IoTD we have to solve \eqref{eq:minimaxstandalone}.	 From \cite[Theorem 1]{NIPS2014_5423} we know that the solution of \eqref{eq:minimaxstandalone} is at $ \bar{p}_{g_i} = p_{\textrm{data}_i}$. Now from \eqref{eq:localvalue} we have:
	 \begin{align}
	 {V}_i (\bar{D}_i,\bar{G}_i) &= \int_{x}p_{\textrm{data}}(x)\log \bar{D}_i(x)dx\nonumber\\
	 &+ \int_{z}p_{z_i}(z)\log\left(1-\bar{D}_i\left(\bar{G}_{i}\left(z\right)\right)\right)dz\nonumber\\
	 & = \int_{x}\Big(p_{\textrm{data}}(x)\log \bar{D}_i(x)\nonumber\\
	 &+ \bar{p}_{{g}_i}(x)\log\left(1-\bar{D}_i\left(x\right)\right)\Big)dx. \label{eq:V_ibar}
	 \end{align}
	 For any $ p_{\bar{g}_i} $, \eqref{eq:V_ibar} reaches its minimum at $ \bar{D}_i = \frac{p_{\textrm{data}}}{p_{\textrm{data}} + \bar{p}_{g_i}} $. Since we know the standalone IoTD reaches its optimal point at $ \bar{p}_{g_i} = p_{\textrm{data}_i}$, then we will have:
	 \begin{align}
	 	{V}_i (\bar{D}_i,\bar{G}_i) &= \int_{x}\Bigg(p_{\textrm{data}}(x)\log \left(\frac{p_{\textrm{data}}(x)}{p_{\textrm{data}}(x) + p_{\textrm{data}_i}(x)}\right)\nonumber\\
	 	&+ p_{\textrm{data}_i}(x)\log\left(\frac{p_{\textrm{data}_i}(x)}{p_{\textrm{data}}(x) + p_{\textrm{data}_i}(x)}\right)\Bigg)dx\nonumber\\
	 	&= \int_{x}\Bigg(p_{\textrm{data}}(x)\log \left(\frac{p_{\textrm{data}}(x)}{2\frac{p_{\textrm{data}} (x) + p_{\textrm{data}_i}(x)}{2}}\right)\nonumber\\
	 	&+ p_{\textrm{data}_i}(x)\log\left(\frac{p_{\textrm{data}_i}(x)}{2\frac{p_{\textrm{data}} (x) + p_{\textrm{data}_i}(x)}{2}}\right)\Bigg)dx\nonumber\\
	 	&= \int_{x}\Bigg(p_{\textrm{data}}(x)\Bigg(\log \left(\frac{p_{\textrm{data}}(x)}{\frac{p_{\textrm{data}} (x) + p_{\textrm{data}_i}(x)}{2}}\right) \nonumber\\
	 	&+ \log(1/2)\Bigg)+ p_{\textrm{data}_i}(x)\Bigg(\log\left(\frac{p_{\textrm{data}_i}(x)}{\frac{p_{\textrm{data}} (x) + p_{\textrm{data}_i}(x)}{2}}\right)\nonumber\\
	 	&+ \log(1/2)\Bigg)\Bigg)dx,
	 \end{align}
	 which simplifies to \eqref{eq:valuestandalone}.
\end{proof}
From \cite[Theorem 1]{NIPS2014_5423} we know that the optimal value for a \emph{centralized} IDS which has access to all of the data points is $-\log(4)$. Thus, from Theorem \ref{Theorem:GAN}, we can derive that the standalone IDS cannot minimize the value function as good as a centralized IDS that is connected to all the data points since the optimal value of a standalone IDS is larger than that of a centralized IDS by a value of $s(p_{\textrm{data}_i}||p_{\textrm{data}})$. Next, we also show that the \emph{true positive rate}, $T_P$, i.e. correctly detected anomaly rate, of an standalone IDS is always less than a centralized IDS that has access to all of the data points.
\begin{proposition}\label{proposition}
	The $ T_P $ of a standalone IoTD anomaly detector is always less than an IoTD that has access to all of the data points.
\end{proposition}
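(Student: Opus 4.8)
The plan is to turn the true positive rate into an explicit functional of the trained discriminator and then compare the two detectors through the closed forms already obtained in Theorem~\ref{Theorem:GAN}. Concretely, I would fix a decision threshold $\tau\in(0,1)$ and declare a received point $\boldsymbol{x}$ to be an anomaly whenever $D_i(\boldsymbol{x})\le\tau$, since a discriminator output close to $0$ signals an abnormal point. Writing $p_a$ for the distribution of anomalous inputs, the true positive rate is $T_P=\Pr_{\boldsymbol{x}\sim p_a}\!\left[D_i(\boldsymbol{x})\le\tau\right]$, or, in a threshold-free form that is easier to manipulate, the expected anomaly score $\mathbb{E}_{\boldsymbol{x}\sim p_a}\!\left[1-D_i(\boldsymbol{x})\right]$. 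The first task is to substitute the optimal discriminators into this expression: for the detector with access to all data the minimax optimum gives $D^\ast(\boldsymbol{x})=\frac{p_{\textrm{data}}(\boldsymbol{x})}{p_{\textrm{data}}(\boldsymbol{x})+p_g(\boldsymbol{x})}$ with $p_g=p_{\textrm{data}}$, while for the standalone detector the discriminator used to evaluate its performance against the true data is $\bar{D}_i(\boldsymbol{x})=\frac{p_{\textrm{data}}(\boldsymbol{x})}{p_{\textrm{data}}(\boldsymbol{x})+p_{\textrm{data}_i}(\boldsymbol{x})}$, exactly as used in the proof of Theorem~\ref{Theorem:GAN}.

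With these forms in hand, I would reduce the statement to a pointwise comparison of $1-D^\ast$ and $1-\bar D_i$ on the support where anomalies occur and then integrate against $p_a$. The key quantity is the residual mismatch between the learned ``normal'' model and the true data manifold: the full-data detector has driven its generator to the global optimum $p_g=p_{\textrm{data}}$, so its value attains $-\log(4)$, whereas Theorem~\ref{Theorem:GAN} shows the standalone detector is bounded away from this optimum by $s(p_{\textrm{data}_i}\,\|\,p_{\textrm{data}})\ge 0$. I would exploit exactly this gap: on regions that are anomalous with respect to the full normal behavior $p_{\textrm{data}}$ but not sufficiently atypical under the partial model $p_{\textrm{data}_i}$, the standalone discriminator assigns a larger value than $D^\ast$, so fewer such points fall below $\tau$. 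Nonnegativity of the Jensen--Shannon divergence should then yield $T_P^{\,\textrm{standalone}}\le T_P^{\,\textrm{all-data}}$, with equality precisely when $p_{\textrm{data}_i}=p_{\textrm{data}}$, i.e.\ when the single dataset already represents the whole system.

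The hard part will be making the link between the scalar optimal value of Theorem~\ref{Theorem:GAN} and the integral defining $T_P$ rigorous, because the value functional mixes the true-positive behavior (the second, generator term) with the true-negative behavior (the first, data term), and a naive decomposition does not isolate $T_P$ cleanly. A related technical obstacle is that at exact convergence both optimal discriminators collapse to $D\equiv\tfrac12$ on the common support, which makes a raw threshold comparison degenerate; I would circumvent this either by carrying out the analysis with the best-response discriminator before the generator fully matches the data, or by comparing the threshold-free scores $\mathbb{E}_{\boldsymbol{x}\sim p_a}[1-D_i(\boldsymbol{x})]$ and tying their difference back to the divergence term through the same manipulation of adding and subtracting $\log\tfrac12$ that converted \eqref{eq:V_ibar} into a Jensen--Shannon divergence in Theorem~\ref{Theorem:GAN}. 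Finally I would verify monotonicity of the resulting bound in $s$ and the equality case, so that the conclusion reads as a strict inequality whenever the local and global data distributions differ.
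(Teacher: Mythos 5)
Your plan does not close, and it also diverges from what the paper actually does. The paper's argument is much more direct: it takes from \cite[Proposition 1]{NIPS2014_5423} that the full-data optimal discriminator is $D^*(x)=\tfrac12$ on the data support, notes that the standalone best response is $\bar D_i=\frac{p_{\textrm{data}}}{p_{\textrm{data}}+p_{\textrm{data}_i}}$, and then simply \emph{defines} the true positive rate as the complement of the normalized absolute deviation, $T_{P_i}=\E\bigl[1-\bigl|\tfrac12-\bar D_i(x)\bigr|/\tfrac12\bigr]$. Under that definition the conclusion is immediate: $T_{P_i}$ attains its maximum exactly when $\bar D_i\equiv\tfrac12$, i.e.\ when $p_{\textrm{data}_i}=p_{\textrm{data}}$, which is excluded by assumption. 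There is no threshold $\tau$, no anomaly distribution $p_a$, and no appeal to the Jensen--Shannon gap of Theorem~\ref{Theorem:GAN} anywhere in the paper's proof.

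The gap in your version is the one you yourself flag and then leave open: you never establish the link between the scalar value gap $s(p_{\textrm{data}_i}\|p_{\textrm{data}})$ and the integral defining $T_P$, so the proof is a program rather than an argument. Worse, the pointwise step you rely on is not sign-consistent. The standalone discriminator $\bar D_i$ exceeds $\tfrac12$ where $p_{\textrm{data}}>p_{\textrm{data}_i}$ and falls below $\tfrac12$ where $p_{\textrm{data}}<p_{\textrm{data}_i}$, so depending on where $p_a$ puts its mass your one-sided score $\E_{x\sim p_a}[1-D_i(x)]$ can come out \emph{larger} for the standalone detector than for the full-data one, contradicting the inequality you want. (Indeed, on a region that is typical for $p_{\textrm{data}_i}$ but anomalous for $p_{\textrm{data}}$, $\bar D_i$ is close to $0$, not larger than $D^*$ as you assert.) This two-sided deviation is precisely why the paper's formulation wraps the discrepancy $\tfrac12-\bar D_i(x)$ in an absolute value before taking expectations; any repair of your approach needs an analogous symmetrization (the paper's intrusion rule likewise flags outputs near $0$ \emph{or} $1$), after which the threshold machinery and the detour through $p_a$ and the divergence bound become unnecessary.
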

\begin{proof}
	From \cite[Proposition 1]{NIPS2014_5423}, we know that the optimal discriminator of a centralized IDS that has access to the full dataset is $ D^*(x) = \frac{1}{2}, \, \forall x \in p_{\textrm{data}}$. Therefore, the difference between the discriminator of the standalone IoTD,$ \frac{p_{\textrm{data}}}{p_{\textrm{data}} + \bar{p}_{g_i}} $, and the discriminator of a GAN with the full data, $ 1/2 $, shows the error of the standalone IoTD. Thus, the $ T_P $ of the standalone IoT is complement of such error as follows:
	\begin{align}
		T_{P_i} &= \E \left[1 - \left|\frac{1/2 - \frac{ p_{\textrm{data}}(x)}{p_{\textrm{data}_i}(x) + p_{\textrm{data}}(x)}}{1/2}\right|\right],\label{eq:TP}
	\end{align}
	where \eqref{eq:TP} reaches its maximum when $p_{\textrm{data}_i}(x) = p_{\textrm{data}}(x)$, which contradicts with the fact that $p_{\textrm{data}_i}(x) \neq p_{\textrm{data}}(x)$. Therefore true positive ratio of an standalone IoTD is smaller than the IoT that has access to all data points.
\end{proof}
Although the standalone GAN may be able to detect internal intrusions, i.e., the IoTD can detect if an adversary has manipulated its own data, however, an attacker can also manipulate the IDS itself to stay stealthy at the attacked IoTD. One approach to stop the attacker from staying stealthy in internal intrusions can be to implement a centralized IDS that monitors all of the IoTDs. However, in this approach the communication overhead can be extremely high due to the large-scale nature of the IoT system. Moreover, in this case, the central IDS should have access to all of the IoTDs' datasets which may not be practical in a privacy preserving IoT system. Furthermore, a centralized IDS makes the IoT system vulnerable to attacks on the central unit. Therefore, next, we propose a novel distributed GAN architecture that provides an effective IDS by adapting a mechanism in which every IoTD monitors the neighbor IoTDs. 
\section{Distributed GAN-based IDS for IoT Systems}\label{sec:GAN}
In our proposed distributed GAN-based IDS, we build on the architecture from \cite{hardy2018md}. The goal of our distributed GAN is to find a discriminator at every IoTD without sharing their datasets with each other such that every IoTD's discriminator can discriminate if a new data point follows the total data distribution, $ p_\text{data} $. The main difference of the distributed IDS with the standalone IDS is that the standalone IDS learns to compare a new data point with its own data distribution, $ p_{\text{data}_i}  $, however, in the proposed distributed IDS, every IoTD can compare a new data point with the distribution of the total data $ p_\text{data} $. Therefore, since in the distributed IDS every IoTD's discriminator knows the distribution of the total data, then every IoTD can detect intrusion on other IoTDs as well. 

In our distributed GAN, \emph{during only the training phase}, we use a central unit that has a generator $G_{\boldsymbol{\phi}}$ where $\boldsymbol{\phi}$ is the weights of the generator's ANN. Furthermore, every IoTD only has a discriminator denoted by $D_{\boldsymbol{\theta}_i}$ where $\boldsymbol{\theta}_i$ is the weights of every discriminator's ANN. In this architecture, every IoTD is connected to at least one other IoTD in a wireless network such that the connection graph of the IoTDs must construct a \emph{cycle}. Moreover, at the training phase, every IoTD is connected to the central unit. We define $T$ as the period of \emph{epochs} at which IoTDs communicate with the center and $E$ as the period of epochs at which the IoTDs communicate with each other. An epoch is a training session during which all of the data points have been used to update the ANN weights. Next, we explain the training phase for the distributed GAN-based IDS architecture as shown in Fig. \ref{fig:systemmodel}.
\subsection{Training of the Discriminators}
Every $T$ epochs, for a fixed $\boldsymbol{\phi}$, the generator generates $2n$ batches $\left\{\mathcal{B}_{a_1},\dots,\mathcal{B}_{a_n}, \mathcal{B}_{g_1},\dots,\mathcal{B}_{g_n}\right\}$ of $b$ anomalous points, i.e. fake points that are not derived from the actual datasets. Moreover, the discriminator at each IoTD $i$ samples a batch $\mathcal{B}_{r_i}$ of $b$ points from its available dataset $\mathcal{D}_i$. The generator sends every generated batch $\mathcal{B}_{a_i}$ to each IoTD $ i $ which, in turn, calculates the following loss value:
\begin{align}
	L_i(\boldsymbol{\theta}_i) = \frac{1}{b}\left[\sum_{x \in \mathcal{B}_{r_i}}\log D_{\boldsymbol{\theta}_i}(x) + \sum_{x \in \mathcal{B}_{a_i}}\log\left( 1 - D_{\boldsymbol{\theta}_i}(x)\right) \right].
\end{align}
$L_i(\boldsymbol{\theta}_i)$ characterizes an approximation of the value function for each IoTD's discriminator in \eqref{eq:localvalue}.
Then using a gradient descent method such as the Adam optimizer \cite{kingma2014adam}, the IoTD updates its own weights, $ \boldsymbol{\theta}_i $.
\subsection{Training of the Central Generator}
Every $ T $ epochs, every IoTD $i$ uses $ \mathcal{B}_{g_i} $ to calculate the following loss:
\begin{align}
L_i^g(\boldsymbol{\theta}_i) = \frac{1}{b}\left[ \sum_{x \in \mathcal{B}_{g_i}}\log \left(1 - D_{\boldsymbol{\theta}_i}(x)\right) \right],
\end{align}
which is an approximation of the value function for each IoTD's generator in \eqref{eq:localvalue}. Note that, in \cite{NIPS2014_5423}, it has been shown that for practical implementation the generator's loss should not include the $= \frac{1}{b}\left[\sum_{x \in \mathcal{B}_{r_i}}\log D_{\boldsymbol{\theta}_i}(x) \right]$ term since discriminator usually converges faster than the generator. Next, every IoTD sends this value to the center. Then, the center uses the received loss values from IoTDs to calculate its average loss:
\begin{align}
	L^g(\boldsymbol{\phi}) = \frac{1}{n}\sum_{i=1}^{n} L_i^g(\boldsymbol{\theta}_i).
\end{align}
Then, the center applies a gradient descent method on the generator to update its weight by minimizing	$ L^g(\boldsymbol{\phi}) $. 
\subsection{Discriminator Weight Swapping}
Every $ E $ epochs, every IoTD sends its discriminator's weights $ \boldsymbol{\theta}_i $ to the neighbor IoTD and receives the discriminator weights of another IoTD. Note that, as mentioned before, the connection graph should include a cycle covering all of the IoTDs. This helps IoTDs to receive the discriminator weights of all of the other IoTDs, eventually. This phase enables the the system to preserve privacy during training because the IoTDs do not share their data but, instead, they share the weights of their discriminators. Using this approach, the generator receives the loss value from all of the discriminators. Moreover, every IoTD's discriminator will be trained on all of the datasets. Therefore, for enough training epochs, the central generator will converge to the distribution of the total dataset $ \mathcal{D} $. In addition, the discriminators of the IoTDs will be similar to a GAN discriminator that has access to the total dataset. Thus, in the distributed GAN architecture, each IoTD's discriminator can detect intrusion on its own data as well as neighbor IoTDs. 
\subsection{Intrusion Detection Phase}
After the convergence of the distributed GAN, there will be no need for the central unit as all of the discriminators at IoTDs can detect the intrusion to the system. Thus, every IoTD IoTD will run its observed real-time data through its own discriminator as well as one of it's neighbor's discriminator. As shown in Proposition \ref{proposition}, the optimal discriminator will output $1/2$ for a normal state data point. Therefore, to detect an intrusion to the system, the output of the discriminator can be compared to $1/2$ and if the output is close to $1/2$, the IoTD will be in a normal state. However, if the output is closer to 0 or 1 the IoTD will be under attack. This approach helps the IoT system to detect an intrusion to the system without dependence on a central unit since every IoTD can also check its neighbor's data. The intrusion detection phase of the proposed distributed GAN-based IDS is shown in Fig. \ref{fig:systemmodel}.
\section{Simulation Results}\label{sec:simulations}
For our simulations, we use a daily activity recognition dataset \cite{dataset} that is collected from 30 subjects with different genders, ages, heights, and weights using a smartphone. We use this dataset since it can be a good example of health datasets that are collected by wearable IoTDs. Such health datasets are private to their owners and, thus, the owners may not intend to share them. The collected data is for 12 activities as follows: walking forward, walking left, walking right, walking upstairs, walking downstairs, running forward, jumping, sitting, standing, sleeping, elevator up, and elevator down. The dataset contains 2,365 recordings in total and every recording in this dataset has 561 frequency and time domain features. We split this dataset into training and test datasets with a 4 to 1 ratio.

We consider three IDSs: a standalone GAN that has access only to its own dataset, a centralized GAN that has access to all datasets, and the proposed distributed GAN. We use ANN architectures similar to the ones in \cite{NIPS2014_5423}. To train the ANNs we use the Tensorflow library, a single NVIDIA P100 GPU, and 20 Gigabits of memory. We train the ANNs on the training datasets and stop the training after 5,000 epochs. To model the intrusion, we consider an attacker who can inject false data to the features of the training dataset. We consider 10 scenarios with different attack-to-signal power ratios in which a random Gaussian noise is added to every feature of the training data points. In addition, we consider an internal attack detection in which every IoTD checks its own data and an external attack detection in which every IoTD checks its neighbor IoTDs.

\begin{figure*}[t!]
\centering
\begin{subfigure}{0.38\textwidth}
	\includegraphics[width=\columnwidth]{./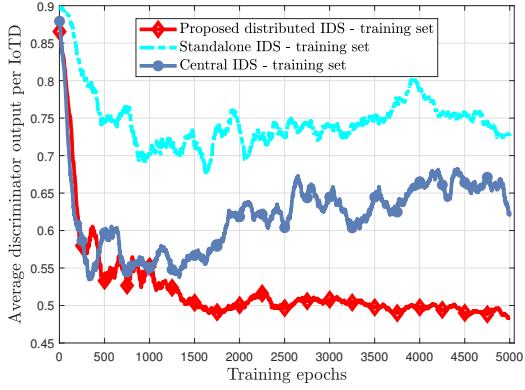}
	\caption{Average discriminator output of IDSs for training datasets.}
	\label{fig:conv_train}
\end{subfigure}
 ~
\begin{subfigure}{0.38\textwidth}
	\includegraphics[width=\columnwidth]{./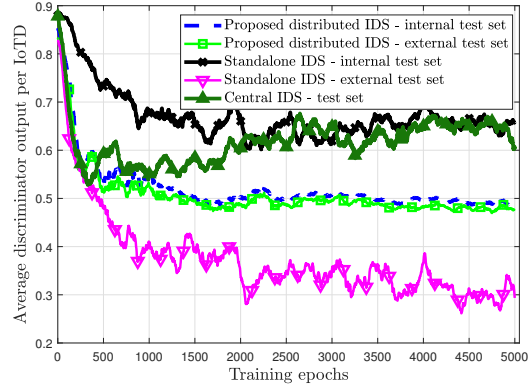}
	\caption{Average discriminator output of IDSs for test datasets.}
	\label{fig:conv_test}
\end{subfigure}
\caption{Average discriminator output VS training epochs}
\label{Fig:training}
\vspace{-6mm}
\end{figure*}

Fig. \ref{Fig:training} shows the output of the discriminator during the training phase. We can see from \ref{fig:conv_train} that the proposed IDS reaches the optimal discriminator value which is $0.5$ after convergence. However, the standalone and central IDSs reach values different from the optimal points. A standalone IDS has a non optimal discriminator output because it does not have access to the complete data. In addition, the central IDS also cannot reach the optimal discriminator since having access to all data points results in overfitting the ANN. As we can see from Fig. \ref{fig:conv_train} the central IDS's discriminator starts increasing around epoch 1000 after getting close to the optimal discriminator value. Such behavior of the central approach is aligned with the GAN-centric results of in \cite{hardy2018md}. Also Fig. \ref{fig:conv_test} shows that the proposed distributed IDS reaches the optimal discriminator even for the test dataset, although such dataset has not been used in the training. This shows that, the discriminators of the distributed GANs are successful in approximating the total dataset and can discriminate even the unseen data points.

\begin{figure}[t!]
	\centering
	\includegraphics[width=0.8\columnwidth]{./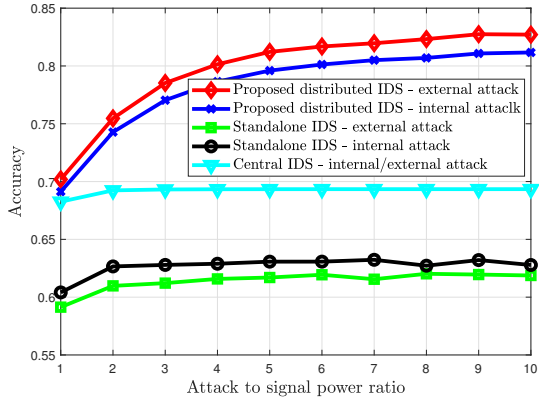}
	\caption{Accuracy of IDSs VS attack to signal power ratio.}
	\label{fig:accuracy}
	\vspace{-5mm}
\end{figure}

Fig. \ref{fig:accuracy} compares the accuracy of three IDSs against internal and external attacks. The accuracy is calculated as $\frac{T_P+T_N}{T_P+T_N + F_P+F_N}$, where $T_P$ is the true positive or the correctly detected anomaly ratio, $F_P$ is the false positive or the falsely detected anomaly ratio, $T_N$ is the true negative or the correctly assigned normal ratio, and $F_N$ is the false negative or the falsely assigned normal ratio. Fig. \ref{fig:accuracy} shows that the proposed distributed IDS outperforms both of the central and standalone IDSs by up to 15\% and 20\% in terms of accuracy, respectively, for both the internal and external attacks. Note that the internal and external attacks for a central IDS are the same since there is just one central unit that discriminates the IoTD data points.

\begin{figure}[t!]
	\centering
	\includegraphics[width=0.8\columnwidth]{./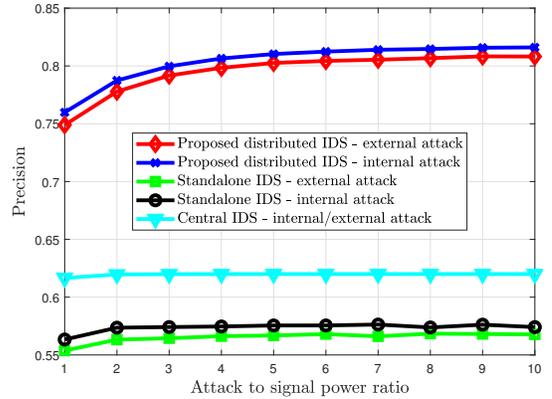}
	\caption{Precision of IDSs VS attack to signal power ratio.}
	\label{fig:precision}
	\vspace{-5mm}
\end{figure}

Fig. \ref{fig:precision} shows the precision of the three IDSs against internal and external attacks with different attack to signal ratios. In IDSs, precision is calculated as $ \frac{T_P}{T_P+F_P} $ and a precision score of 1.0 means that every datapoint labeled as intrusion is indeed an intrusion. We can see from Fig. \ref{fig:precision} that the proposed distributed IDS has up to 25\% and 20\% higher precision compared to the standalone IDS and the central IDS, respectively. This means that the proposed distributed IDS has higher confidence of detecting intrusion to the IoT.

\begin{figure}[t!]
	\centering
	\includegraphics[width=0.8\columnwidth]{./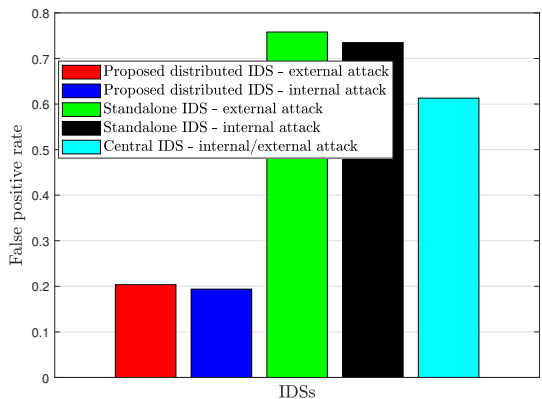}
	\caption{False positive ratio of IDSs.}
	\label{fig:FPR}
	\vspace{-6mm}
\end{figure}

Fig. \ref{fig:FPR} shows the false positive rate of the IDSs which is calculated as $ \frac{F_P}{F_P + T_N} $. In IDSs, a false positive rate is the ratio between the number of normal state data points wrongly categorized as intursion ($F_P$) and the total number of actual normal state data points ($ F_P + T_N $). Note that, since $ F_P $ and $ T_N $ are calculated from the evaluation of normal data points, the power ratio of the attack will not affect the false positive rate. We can see from Fig. \ref{fig:FPR} that the proposed distributed IDS has significantly less false positive rate than the standalone and central IDSs. In addition, both distributed and standalone IDSs have slightly lower false positive rate for internal attacks compared to external attacks.

\section{Conclusion}\label{sec:conc} 
In this paper, we have proposed a distributed GAN-based IDS solution that can detect intrusion to the IoT with minimum dependence on a central unit. In this architecture, every IoTD can monitor its own data as well as neighbor IoTDs to detect internal and external attacks. In addition, the proposed distributed IDS does not require sharing the datasets between the IoTDs, thus, it is practical to implement in IoTs that preserve the privacy of user data such as health monitoring systems or financial applications. We have shown analytically that the proposed distributed IDS can outperform a standalone IDS that has access only to the dataset of a single IDS. Simulation results that use a real-world a daily activity recognition dataset show that the proposed distributed GAN-based IDS has up tp 20\% higher accuracy, 25\% higher precision, and 60\% lower false positive rate compared to a standalone IDS.
\def\baselinestretch{0.97}
\bibliographystyle{IEEEtran}
\bibliography{references}
\end{document}